\let\proof\@undefined                        
\let\endproof\@undefined                  
\newtheorem{theorem}{Theorem}
\newtheorem{definition}{Definition}
\newtheorem{assumption}{Assumption}
\newtheorem{corollary}{Corollary}
\newtheorem{remark}{Remark}
\newtheorem{proposition}{Proposition}
\newtheorem{problem}{Problem}
\newtheorem{pro}{Problem 1.\hspace*{-0.2cm}}   
\definecolor{officegreen}{rgb}{0.1, 0.75, 0.0}
\newcommand{\yongn}[1]{{\color{black} #1}}
\newcommand{\yong}[1]{{\color{black} #1}}
\newcommand{\moh}[1]{{\color{black} #1}}
\newcommand{\jin}[1]{{\color{black} #1}}
\newcommand{\revise}[1]{{\color{black} #1}}
\newcommand{\moha}[1]{{\color{black} #1}}
\title{\LARGE \bf
\moh{Data-Driven Model Invalidation for Unknown Lipschitz Continuous Systems via Abstraction} 
}
\author{\moh{Zeyuan Jin\,$^{\rm \star}$, Mohammad Khajenejad\,$^{\rm \star}$ and Sze Zheng Yong}
\thanks{$^{\rm \star}$ These authors contributed equally to this paper.}
\thanks{The authors are with the School for Engineering of Matter, Transport and Energy,
Arizona State University, Tempe, AZ,  USA
       (email: {\tt \{zjin43,mkhajene,szyong\}@asu.edu})}%
       \thanks{This work was supported in part by DARPA grant D18AP00073.} 
}
\begin{document}

\maketitle

\begin{abstract}
In this paper, we consider the data-driven model invalidation problem for Lipschitz continuous systems, where instead of given mathematical models, only prior noisy sampled data of the systems are available. We show that this data-driven model invalidation problem can be solved using a tractable feasibility check. Our proposed approach consists of two main components: (i) a data-driven abstraction part that uses the noisy sampled data to over-approximate the unknown Lipschitz continuous dynamics with upper and lower functions, and (ii) an optimization-based model invalidation component that determines the incompatibility of the data-driven abstraction with a newly observed length-$T$ output trajectory. Finally, we discuss several methods to reduce the computational complexity of the algorithm and demonstrate their effectiveness with a simulation example of swarm intent identification. 
\end{abstract}


\section{Introduction}

\emph{Motivation.} In most Cyber-Physical Systems (CPS) applications, their analysis and design often require/assume the availability of mathematical models of the considered processes. 
Moreover, due to increasingly interconnected and integrated dynamics of such nonlinear, uncertain or hybrid systems, abstraction approaches have been developed to approximate the original complex dynamics with simpler dynamics \cite{Tabuada2009}. 
However, the precise model of the complex dynamics is often unknown, hence it is a challenging and interesting problem to determine ways to find abstractions and to analyze systems from only noisy sampled data. 

\emph{Literature Review.} \revise{\moha{The problem of determining whether an admissible model set \cite{smith1992model} can generate a finite sequence of experimental input-output data, \yong{known as model (in)validation, is useful for many control applications, including fault diagnosis and model identification \cite{harirchi2017guaranteed,Jin2019CDC}}.}} 
Several approaches for model invalidation have been recently developed for linear parameter varying systems \cite{bianchi2010robust, sznaier2003lmi}, nonlinear systems \cite{Prajna2006}, switched auto-regressive models \cite{ozay2014convex}, switched affine systems \cite{harirchi2017guaranteed,harirchi2018automatica}, etc., when their mathematical models are given.
Similarly, when mathematical models are available, \emph{abstraction} approaches have been widely studied for linear systems \cite{Girard2007}, nonlinear systems \cite{Girard2012,Singh2018CDC}, uncertain affine and nonlinear systems \cite{Shen2019ACC,Jin2019CDC}, 
and discrete-time hybrid systems \cite{Alimguzhin2017} in order to find simpler dynamics/systems that share most properties of interest with the original system dynamics for the sake of reducing computational complexity. However, these approaches are not applicable when accurate mathematical models are unavailable.


On the other hand, data-driven approaches 
that use sampled/observed input-output data to \emph{abstract} or over-approximate 
unknown dynamics using a bounded-error setting, where set-valued uncertainties are considered, 
have gained increased popularity over the last few years. 
The general objective of such data-driven methods is to find \emph{a set of known systems} that share the most properties of interest with the unknown system dynamics \cite{Milanese2004SetMI,canale2014nonlinear}. Under the assumption that the unknown dynamics is Lipschitz continuous, 
\cite{zabinsky2003optimal} provides a recursive algorithm to approximate upper and lower bounding functions 
for univariate functions, while \cite{beliakov2006interpolation} introduced  a novel computational approach for multivariate functions. The research in \cite{calliess2014conservative} further extended this approach to unknown dynamics that are H\"{o}lder continuous. Nonetheless, to our knowledge, these approaches do not explicitly deal with noise and their effect on the abstraction.

\emph{Contributions.} The goal of this paper is to tackle the problem of data-driven model invalidation by drawing upon 
model invalidation methods designed for when  mathematical models are available 
and data-driven approaches for finding abstractions/over-approximations of unknown dynamic systems from sampled data. Specifically, we propose a data-driven model invalidation algorithm which consists of two novel parts: (i) a data-driven abstraction component that over-approximates the unknown Lipschitz continuous dynamics from noisy sampled data, and (ii) an optimization-based model invalidation constituent that determines if the data-driven abstraction is incompatible with a new observed length-$T$ output trajectory. 
We further show that our data-driven model algorithm can be cast as a tractable feasibility check problem. In addition, we discuss and compare the use of several heuristic \revise{downsampling methods} for reducing the computational complexity of the algorithm, using an illustrative example of identifying swarm intent models.   

\section{Background} \vspace{-0.05cm}
\emph{Notation.} 
$\mathbb{R}^n$ denotes the $n$-dimensional Euclidean space \yong{and $\mathbb{R}^+$ is the set of all positive real numbers. 
For vectors $v,w \in \mathbb{R}^n$, $\|v\|_p \triangleq {\big (}\sum _{i=1}^{n}\left|x_{i}\right|^{p}{\big )}^{1/p}$, $1\leq p \leq \infty$ (in particular, $\| v \|_{\infty} \triangleq \max \limits_{1 \leq i \leq n} v_i$) and $v\leq w$ is a component-wise inequality. 
$\mathds{1}_m$ 
is an $m$-dimensional vector of ones.} 
\vspace{-0.05cm}
\subsection{Modeling Framework} \vspace{-0.05cm}
Consider a noisy discrete-time nonlinear auto-regressive  dynamic system model $\mathcal{G}$, at time step $k \geq 0$:
\begin{align}
\hspace*{-0.15cm}
y_{k+1} &= f(s_k) + w_k, \label{eq:system}\\
\tilde{y}_{k} &= {y}_{k} + v_k, \label{eq:output}
\hspace*{-0.15cm}
\end{align}
where $s_k\triangleq[y^{(1)}_k, \cdots, y^{(m)}_k,\cdots, y^{(1)}_{k-n_y+1}, \cdots, y^{(m)}_{k-n_y+1}]^\top \in \mathbb{R}^n$, $y_k \triangleq \begin{bmatrix}y^{(1)}_k, & y^{(2)}_k, & \cdots, & y^{(m)}_k\end{bmatrix}^\top \in \mathbb{R}^m$ and $f(\cdot) \triangleq \begin{bmatrix}f^1(\cdot) & \dots & f^i(\cdot) & \dots & f^m(\cdot) \end{bmatrix}^\top$ with 
 $f^i(\cdot):\mathbb{R}^n \rightarrow \mathbb{R}$ for all $ \ i \in \{1\dots m \}$ and $n=m n_y$, as well as process and measurement noise signals $w_k \in \mathcal{W}, v_k\in \mathcal{V}$ that are bounded, i.e., $\mathcal{W} \triangleq \{w_k \mid |w_k^{(i)}|\leq\varepsilon_w^{(i)},\forall i \in \{1,2,\cdots, m\}\}, \mathcal{V} \triangleq \{v_k\mid |v_k^{(i)}|\leq\varepsilon_v^{(i)},\forall i \in \{1,2,\cdots, m\}\}, $ with $\varepsilon_w^{(i)},\varepsilon_v^{(i)}>0$. 
 Functions $f^i(\cdot)$ are unknown but 
a noisy sampled data set $\mathcal{D}=\bigcup_{\ell=1}^N \mathcal{D}_\ell$ is available, consisting of $N$ trajectories each of length $T_\ell$ represented by $\mathcal{D}_\ell=\{\yongn{\tilde{y}_{j,\ell}}|j=0,\cdots, T_\ell-1\}$, 
where $\yongn{\tilde{y}_{k,\ell}}$ are noise corrupted measurements of $y_{k,\ell} \in \mathcal{Y}=[\mathcal{Y}_l, \mathcal{Y}_u]\subset \mathbb{R}^m$ according to \eqref{eq:output}. 

Moreover, we define $\yongn{\tilde{s}_{k,\ell}}\triangleq [(\yongn{\tilde{y}_{k,\ell}})^\top, \hdots, (\yongn{\tilde{y}_{k-n_y+1,\ell}})^\top]^\top $ and $\varepsilon_s$ as the upper bound of $\|s_{k,\ell}-\yongn{\tilde{s}_{k,\ell}}\|_p$ with $s_{k,\ell}\in\mathcal{S}$, \yongn{i.e., $\varepsilon_s = (n_y \sum_{i=1}^m (\varepsilon^{(i)}_v)^p)^\frac{1}{p}$ ($=\max_{i\in\{1,\dots,m\}} \varepsilon^{(i)}_v$ if $p=\infty$), 
and $\mathcal{S}$ is the $n_y$-ary Cartesian product of $\mathcal{Y}$}. 
For convenience, we also define a concatenated data set $\overline{\mathcal{D}}_\ell \triangleq \{(\yongn{\tilde{s}_{j,\ell}},\yongn{\tilde{y}_{j+1,\ell}})|j=n_y,\cdots, T_\ell-1\}$ that can \revise{be} constructed from $\mathcal{D}_\ell$ and similarly, $\overline{\mathcal{D}}=\bigcup_{\ell=1}^N \overline{\mathcal{D}}_\ell$.



Further, we assume continuity for 
$f^i(\cdot)$ \yongn{as follows:} 

\begin{assumption} \label{assumption:lip}
Each unknown vector field $f^i(\cdot)$, $\forall i \in \{1\dots m\}$, is $L^{(i)}_p$-Lipschitz continuous, i.e., there exists \yongn{a positive finite-valued} $L^{(i)}_p>0$, called \yongn{the} Lipschitz constant, such that $\forall x_1,x_2$ in domain of $f$, $|f^i(x_2)-f^i(x_1)| \leq L^{(i)}_p\|x_2-x_1\|_p$.
\end{assumption}
\vspace{-0.2cm}
\subsection{Abstraction/Over-Approximation}
\vspace{-0.1cm}
The goal of an abstraction procedure is to over-approximate the original (possibly unknown) function $f(\cdot) \moha{: \mathcal{S} \subset \mathbb{R}^n \to \mathbb{R}^m}$ by a pair of functions $\underline{f}$ and $\overline{f}$ \yongn{(i.e., to find an abstraction model $\mathcal{H}\triangleq\{\overline{f},\underline{f}\}$)} such that 
the function $f(\cdot)$ is bounded/sandwiched by the pair of functions, \yong{i.e., $\underline{f}$ and $\overline{f}$ satisfy} the following:
\begin{align} \label{eq1}
 \underline{f}(s) \leq f(s) \leq \overline{f}(s), \ \yongn{\forall s \in \yong{\mathcal{S}}}. 
\end{align}
\vspace{-0.6cm}
\subsection{Length-$T$ Behavior}
\vspace{-0.1cm}
Next, in preparation for the model invalidation problem, we adopt the definition in \cite{harirchi2017guaranteed} of the length-$T$ behavior of the original unknown model $\mathcal{G}$ and the abstracted model $\mathcal{H}$ based on the prior sampled data $\mathcal{D}$: 
\begin{definition}[Length-$T$ Behavior of Original Model $\mathcal{G}$] \label{behavior}
The length-$T$ behavior of the original (unknown) model $\mathcal{G}$ is the set of all length-$T$ output trajectories compatible with $\mathcal{G}$, given by the set 
\begin{align}
\hspace{-0.3cm}\begin{array}{rl}
 \mathcal{B}^T(\mathcal{G}):=&\{ \{\yongn{\tilde{y}_k}\}_{k=0}^{T-1}\mid \exists y_k\in\mathcal{Y}, w_k \in \mathcal{W}, v_k \in \mathcal{V}, \\
 &\ \text{ for } k \in \mathbb{Z}^{0}_{T-1}, \ \text{ s.t. \eqref{eq:system}--\eqref{eq:output} hold}
\}.
\end{array}
\end{align}
\end{definition}

\begin{definition}[Length-$T$ Behavior of Abstracted Model $\mathcal{H}$] \label{behavior}
The length-$T$ behavior of the abstracted model $\mathcal{H}$ is the set of all length-$T$ output trajectories compatible with $\mathcal{H}$, given by the set 
\begin{align}
\hspace{-0.3cm}\begin{array}{rl}
 \mathcal{B}^T(\mathcal{H}):=&\{ \{\yongn{\tilde{y}_k}\}_{k=0}^{T-1}\mid \exists y_k\in\mathcal{Y}, w_k \in \mathcal{W}, v_k \in \mathcal{V},\\ 
 &\ \text{ for } k \in \mathbb{Z}^{0}_{T-1}, \ \text{ s.t. \eqref{eq:output}--\eqref{eq1} hold}\}.
\end{array}
\end{align}
\end{definition}
Using the above definitions of system behaviors as well as the fact that $\mathcal{H}$ is an abstraction of $\mathcal{G}$ (by construction), we can conclude that $\mathcal{B}^T(\mathcal{G}) \subseteq \mathcal{B}^T(\mathcal{H})$.

\section {Problem Statement}
We now state the data-driven model invalidation problem that we consider in this paper: 
\begin{problem}[Model Invalidation for $\mathcal{G}$] \label{problem1}
Given a new sequence of output trajectory $\{\tilde{y}^n_k\}_{k=0}^{T-1}$, an unknown target model $\mathcal{G}$, for which only prior sampled data $\mathcal{D}_\mathcal{G}$ is available, and an integer $T$, determine whether the trajectory belongs to the target model, i.e., to determine if the following holds: 
\begin{align}\label{eq:prob2}
\hspace*{-0.15cm}\begin{array}{l}
\{\tilde{y}^n_k\}_{k=0}^{T-1}\in \mathcal{B}^T(\mathcal{G}).
\end{array}\hspace*{-0.15cm}
\end{align}
\end{problem}
However,  \yongn{it is non-trivial to solve Problem \ref{problem1}} directly because only prior data set $\mathcal{D}_\mathcal{G}$ from the original model $\mathcal{G}$ with unknown dynamics is available. Hence, we aim to solve a stricter auxiliary problem 
that, if solved, also provides a solution to Problem \ref{problem1}. A two-step process is taken, where the first step is solving the following problem to obtain an abstraction model of the unknown dynamics $\mathcal{G}$: 
\begin{pro}[Data-Driven Abstraction] 
\label{problem1.1}
For a set of $N$ sampling data points $\mathcal{D}_\mathcal{G}$, 
find a pair of upper and lower functions $\overline{f}$ and $\underline{f}$ \yongn{(i.e., $\mathcal{H}\triangleq \{\overline{f},\underline{f}\}$)} such that:
\begin{align}
\hspace*{-0.15cm}
\underline{f}(s)\leq f(s) \leq \overline{f}(s), \ \yongn{\forall s \in \mathcal{S},}
\hspace*{-0.15cm}
\end{align}
where $f(\cdot) \yongn{: \mathcal{S} \subset \mathbb{R}^n \to \mathbb{R}^m}$ is the original unknown dynamics of the system, and correspondingly determine   $\mathcal{B}^T(\mathcal{H})$.
\end{pro}

The second step is to solve the following model invalidation problem for the abstracted models.
\begin{pro}[Model Invalidation for $\mathcal{H}$]  \label{problem1.2}
Given a new sequence of output trajectory $\{\tilde{y}^n_k\}_{k=0}^{T-1}$, an abstraction model $\mathcal{H}$ of target model $\mathcal{G}$ and an integer $T$, determine whether the trajectory belongs to the target model. That is, to determine if the following is true,
\begin{align}\label{eq:prob2}
\hspace*{-0.15cm}\begin{array}{l}
\{\tilde{y}^n_k\}_{k=0}^{T-1}\in \mathcal{B}^T(\mathcal{H}).
\end{array}\hspace*{-0.15cm}
\end{align}
\end{pro}
\yong{Next, we show that the solution to Problem 1.\ref{problem1.1} and 1.\ref{problem1.2} is \emph{sufficient} to solve Problem \ref{problem1}.}
\yong{Note that this sufficient (only) condition renders it inapplicable for model validation (due to interpolation errors and noise), but is useful for the model invalidation problem that we consider in this paper.}
\begin{proposition} \label{propMI}
Suppose that $\mathcal{B}^T(\mathcal{G}) \subseteq \mathcal{B}^T(\mathcal{H})$. Then, $\{\tilde{y}^n_k\}_{k=0}^{T-1}\notin \mathcal{B}^T(\mathcal{G})$ if $\{\tilde{y}^n_k\}_{k=0}^{T-1}\notin \mathcal{B}^T(\mathcal{H})$ and $\{\tilde{y}^n_k\}_{k=0}^{T-1}\in \mathcal{B}^T(\mathcal{H})$ if $\{\tilde{y}^n_k\}_{k=0}^{T-1}\in \mathcal{B}^T(\mathcal{G})$.
\end{proposition}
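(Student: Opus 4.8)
The plan is to derive both implications directly from the hypothesized set inclusion $\mathcal{B}^T(\mathcal{G}) \subseteq \mathcal{B}^T(\mathcal{H})$, since each is merely a restatement of what containment between two sets means. I would first dispatch the second implication: suppose $\{\tilde{y}^n_k\}_{k=0}^{T-1} \in \mathcal{B}^T(\mathcal{G})$. By the very definition of the subset relation, every element of $\mathcal{B}^T(\mathcal{G})$ is also an element of $\mathcal{B}^T(\mathcal{H})$, so $\{\tilde{y}^n_k\}_{k=0}^{T-1} \in \mathcal{B}^T(\mathcal{H})$ follows at once. No further construction is needed, because the containment is available as a standing hypothesis (and, as noted just before the statement, it follows from $\mathcal{H}$ being an abstraction of $\mathcal{G}$ via the sandwiching inequality \eqref{eq1}).

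For the first implication, I would simply take the contrapositive of the second. The assertion $\{\tilde{y}^n_k\}_{k=0}^{T-1} \notin \mathcal{B}^T(\mathcal{H}) \Rightarrow \{\tilde{y}^n_k\}_{k=0}^{T-1} \notin \mathcal{B}^T(\mathcal{G})$ is logically equivalent to $\{\tilde{y}^n_k\}_{k=0}^{T-1} \in \mathcal{B}^T(\mathcal{G}) \Rightarrow \{\tilde{y}^n_k\}_{k=0}^{T-1} \in \mathcal{B}^T(\mathcal{H})$, which is exactly what was just established. Concretely, were the trajectory to lie in $\mathcal{B}^T(\mathcal{G})$ while failing to lie in $\mathcal{B}^T(\mathcal{H})$, this would contradict the containment; hence a trajectory excluded from the abstraction's behavior set cannot belong to the original model's behavior set.

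There is no genuine obstacle here: the entire content of the proposition rests on the inclusion $\mathcal{B}^T(\mathcal{G}) \subseteq \mathcal{B}^T(\mathcal{H})$, whose justification — that any output trajectory compatible with \eqref{eq:system}--\eqref{eq:output} is also compatible with the relaxed constraints \eqref{eq:output}--\eqref{eq1}, since $f$ is bounded between $\underline{f}$ and $\overline{f}$ — has already been settled in the preceding discussion and is taken as a hypothesis of the proposition. The only point requiring care is the logical bookkeeping: keeping the direction of the implication straight so that the first claim is read as the contrapositive of the second rather than its converse, and noting explicitly (as the surrounding text does) that the reverse inclusion fails in general, so the result furnishes only a sufficient, one-directional test suitable for invalidation rather than validation.
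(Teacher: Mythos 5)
Your proposal is correct and follows essentially the same route as the paper: both implications are read off directly from the containment $\mathcal{B}^T(\mathcal{G}) \subseteq \mathcal{B}^T(\mathcal{H})$, with the invalidation direction obtained as the contrapositive of the membership-preservation direction. The paper phrases exclusion via an (arguably sloppier) intersection-with-the-emptyset notation, but the underlying argument is identical to yours.
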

\begin{proof}
If the output trajectory $\{\tilde{y}^n_k\}_{k=0}^{T-1}$ is excluded from model $\mathcal{H}$, i.e., $\{\tilde{y}^n_k\}_{k=0}^{T-1} \cap \mathcal{B}^T(\mathcal{H})=\emptyset$, it is also excluded from model $\mathcal{G}$ since $\mathcal{B}^T(\mathcal{G}) \subseteq \mathcal{B}^T(\mathcal{H})=\emptyset$. On the other hand, when $\{\tilde{y}^n_k\}_{k=0}^{T-1}\cap  \mathcal{B}^T(\mathcal{G})\neq\emptyset$, then necessarily $\{\tilde{y}^n_k\}_{k=0}^{T-1}\cap\mathcal{B}^T(\mathcal{H})\neq \emptyset$ since $\mathcal{B}^T(\mathcal{H}) \supseteq \mathcal{B}^T(\mathcal{G})$. 
\end{proof}

\section {Data-driven Abstraction and Model Invalidation}
In this section, we first introduce an approach for obtaining a data-driven abstraction of the unknown system \eqref{eq:system}, before proposing an optimization-based approach to \yongn{invalidate} the resulting abstraction model with given new noisy output trajectories. The two algorithms we propose solve Problems 1.\ref{problem1.1} and 1.\ref{problem1.2}, and consequently, Problem \ref{problem1} by Proposition \ref{propMI}.
\subsection {Data-Driven Abstraction Algorithm} 
\begin{theorem} \label{LipschitzInterpolation}{}
Consider system \eqref{eq:system} and its corresponding data set $\overline{\mathcal{D}}=\bigcup_{\ell=1}^N\{(\yongn{\tilde{s}_{j,\ell}}, \yongn{\tilde{y}_{j+1,\ell}})|\moh{j}=n_y, \cdots, T_\ell-1\}$. \moh{Suppose Assumption \ref{assumption:lip} holds. Then, for all $s\in \mathcal{S}$, $\underline{f}(\cdot)$ and $\overline{f}(\cdot)$ are lower and upper abstraction functions for unknown function $f(\cdot)$, i.e., $\forall s \in \mathcal{S}$, $\underline{f}(s) \leq f(s) \leq \overline{f}(s)$,  where $\underline{f}(\cdot) \triangleq \begin{bmatrix}\underline{f}^1(\cdot) & \dots & \underline{f}^i(\cdot) & \dots & \underline{f}^m(\cdot) \end{bmatrix}^\top$, $\overline{f}(\cdot) \triangleq \begin{bmatrix}\overline{f}^1(\cdot) & \dots & \overline{f}^i(\cdot) & \dots & \overline{f}^m(\cdot) \end{bmatrix}^\top$ and} 
 \begin{subequations}
 \begin{align} 
& \overline{f}\moh{^i}(s) \hspace{-0.05cm}= \hspace{-0.25cm} \min_{\substack{j \in \{n_y,\hdots, T_\ell-1\},\\ \ell \in \{1,\hdots,N\}}} \hspace{-0.1cm} ((\yongn{\tilde{y}_{j+1,\ell}})^{(i)}\hspace{-0.1cm}+\hspace{-0.1cm}L^{(i)}_p\|s\hspace{-0.1cm}-\hspace{-0.1cm}\yongn{\tilde{s}_{j,\ell}}\|_p)\hspace{-0.1cm}+\hspace{-0.1cm}\varepsilon^{(i)}_{t} \hspace{-0.05cm}, \label{upper_func}\\[-0.2em]
 & \underline{f}\moh{^i}(s)  \hspace{-0.05cm}= \hspace{-0.25cm}\max_{\substack{j \in \{n_y,\hdots, T_\ell-1\},\\ \ell \in \{1,\hdots,N\}}}\hspace{-0.1cm} ((\yongn{\tilde{y}_{j+1,\ell}})^{(i)}\hspace{-0.1cm}-\hspace{-0.1cm}L^{(i)}_p\|s\hspace{-0.1cm}-\hspace{-0.1cm}\yongn{\tilde{s}_{j,\ell}}\|_p) \hspace{-0.1cm}-\hspace{-0.1cm}\varepsilon^{(i)}_{t} \hspace{-0.05cm}, \label{lower_func}
 \end{align}
 with $\varepsilon^{(i)}_t \triangleq \varepsilon^{(i)}_w+\varepsilon^{(i)}_v+L^{(i)}_p\varepsilon_s$, \moh{for all $i \in \{1,\dots, m \}$.  Moreover, $\underline{f}^i$ and $\overline{f}^i$ are $L^{(i)}_p$-Lipschitz continuous functions and $\underline{f}$ and $\overline{f}$ are also Lipschitz continuous.}
 \end{subequations}
\end{theorem}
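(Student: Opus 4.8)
The plan is to establish the two parts of the statement separately: first the sandwich property $\underline{f}(s) \leq f(s) \leq \overline{f}(s)$, which I prove componentwise and pointwise in $s$, and then the Lipschitz continuity of each abstraction function. Throughout I fix a component index $i$ and an arbitrary query point $s \in \mathcal{S}$.

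For the upper bound I would fix an arbitrary data pair $(\tilde{s}_{j,\ell}, \tilde{y}_{j+1,\ell}) \in \overline{\mathcal{D}}$ and extract what the noisy measurement reveals about $f^i$ at the true regressor $s_{j,\ell}$. Combining the dynamics \eqref{eq:system} with the output map \eqref{eq:output} gives $(\tilde{y}_{j+1,\ell})^{(i)} = f^i(s_{j,\ell}) + w_{j,\ell}^{(i)} + v_{j+1,\ell}^{(i)}$, so $f^i(s_{j,\ell})$ equals $(\tilde{y}_{j+1,\ell})^{(i)}$ up to the two noise terms. Next I invoke Assumption \ref{assumption:lip} to transfer this to the query point via $f^i(s) \leq f^i(s_{j,\ell}) + L_p^{(i)} \| s - s_{j,\ell} \|_p$, and then use the triangle inequality together with $\| s_{j,\ell} - \tilde{s}_{j,\ell} \|_p \leq \varepsilon_s$ to replace the unknown true regressor $s_{j,\ell}$ by the measured $\tilde{s}_{j,\ell}$, at the cost of an extra $L_p^{(i)} \varepsilon_s$. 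Bounding $-w_{j,\ell}^{(i)} \leq \varepsilon_w^{(i)}$ and $-v_{j+1,\ell}^{(i)} \leq \varepsilon_v^{(i)}$ then collects exactly the offset $\varepsilon_t^{(i)} = \varepsilon_w^{(i)} + \varepsilon_v^{(i)} + L_p^{(i)} \varepsilon_s$. Since the resulting inequality holds for every admissible $(j,\ell)$, taking the minimum over the index set yields $f^i(s) \leq \overline{f}^i(s)$. The lower bound $\underline{f}^i(s) \leq f^i(s)$ follows by the mirror-image argument, using $f^i(s) \geq f^i(s_{j,\ell}) - L_p^{(i)} \| s - s_{j,\ell} \|_p$ and taking the maximum.

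For Lipschitz continuity, I would first note that for each fixed $(j,\ell)$ the map $s \mapsto \| s - \tilde{s}_{j,\ell} \|_p$ is $1$-Lipschitz by the reverse triangle inequality, so $s \mapsto (\tilde{y}_{j+1,\ell})^{(i)} + L_p^{(i)} \| s - \tilde{s}_{j,\ell} \|_p$ is $L_p^{(i)}$-Lipschitz, and the additive constant $\varepsilon_t^{(i)}$ does not change the Lipschitz constant. I then invoke the standard fact that a pointwise minimum (respectively maximum) of a finite family of $L$-Lipschitz functions is again $L$-Lipschitz, which gives the claim for $\overline{f}^i$ (respectively $\underline{f}^i$). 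Lipschitz continuity of the stacked vector maps $\overline{f}$ and $\underline{f}$ then follows immediately, since each component is Lipschitz.

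I expect the only genuine obstacle to lie in the error bookkeeping of the first part: keeping the three distinct error sources aligned---the process noise $w$, the measurement noise $v$ acting on the output, and the regressor reconstruction error $\varepsilon_s$ acting on $\tilde{s}$---so that they combine into precisely $\varepsilon_t^{(i)}$, and verifying that the signs stay consistent for both the min-based upper bound and the max-based lower bound. Everything else reduces to routine use of the triangle inequality and the Lipschitz property.
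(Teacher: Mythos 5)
Your proof is correct and follows essentially the same route as the paper's: express $f^i$ at the true regressor via the noisy sample, transfer to the query point with the Lipschitz bound, absorb the regressor error $\varepsilon_s$ and the noise bounds into $\varepsilon_t^{(i)}$, and take the min/max over all data pairs; the Lipschitz claim likewise rests on the $1$-Lipschitzness of $\|\cdot\|_p$ and closure of Lipschitz functions under finite min/max. (Your indexing of the output noise as $v_{j+1,\ell}$ is in fact slightly more careful than the paper's $v_{j,\ell}$, but the bound is unaffected.)
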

\begin{proof}
It follows from Assumption \ref{assumption:lip} that
$$|f^i(x_2)-f^i(x_1)| \leq L^{(i)}_p\|x_2-x_1\|_p, \forall i \in \{1\dots m\}.$$
Let $x_2=s$ be the generic variable and $x_1=\yongn{s_{j,\ell}}=\yongn{\tilde{s}_{j,\ell}}-\yongn{z_{j,\ell}}$ be the de-noised observation of the noisy $\yongn{\tilde{s}_{j,\ell}}$, where $\yongn{z_{j,\ell}}$ satisfies $\|\yongn{z_{j,\ell}}\|_p \leq \varepsilon_s$. Furthermore, from \eqref{eq:system} and \eqref{eq:output}, we have $f^i(\yongn{s_{j,\ell}})=(\yongn{y_{j+1,\ell}})^{(i)}-(\yongn{w_{j,\ell}})^{(i)}=(\yongn{\tilde{y}_{j+1,\ell}})^{(i)}-(\yongn{w_{j,\ell}})^{(i)}-(\yongn{v_{j,\ell}})^{(i)}$.

Next, applying the triangle inequality to 
\hspace{-0.3cm}\begin{gather*}(\yongn{\tilde{y}_{j+1,\ell}})^{(i)}-(\yongn{w_{j,\ell}})^{(i)}-(\yongn{v_{j,\ell}})^{(i)}-L^{(i)}_p\|s-\yongn{\tilde{s}_{j,\ell}}-\yongn{z_{j,\ell}}\|_p \leq \\ \hspace{-0.01cm} f^i(s) \hspace{-0.07cm}\leq \hspace{-0.07cm}(\yongn{\tilde{y}_{j+1,\ell}})^{(i)} \hspace{-0.08cm}-\hspace{-0.08cm}(\yongn{w_{j,\ell}})^{(i)}\hspace{-0.08cm}-\hspace{-0.08cm}(\yongn{v_{j,\ell}})^{(i)}\hspace{-0.08cm}+\hspace{-0.08cm}L^{(i)}_p\hspace{-0.06cm}\|s\hspace{-0.08cm}-\hspace{-0.08cm}\yongn{\tilde{s}_{j,\ell}}\hspace{-0.08cm}-\hspace{-0.08cm}\yongn{z_{j,\ell}}\|_p,\end{gather*}
the results in \eqref{upper_func} and \eqref{lower_func} follow from the fact that these inequalities should hold for \emph{all} the sampled data and \emph{all} the possible values of noise signals. $L^{(i)}_p$-Lipschitz continuity of upper and lower abstraction functions is implied by the fact that function $\|.\|_p$ is $1$-Lipschitz continuous and \cite{sukharev1978optimal}.  
\end{proof}
\begin{corollary} \label{linear}
 \moh{If $1$ or $\infty$} norm \moh{is considered, i.e., $p=1$ or $\infty$, then} the abstraction \moh{functions are} piecewise affine function\moh{s}, and the corresponding abstraction is called a piecewise affine abstraction.
\end{corollary}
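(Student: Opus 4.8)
\emph{Proof plan.} The plan is to show that each scalar component $\overline{f}^i$ and $\underline{f}^i$ defined in \eqref{upper_func}--\eqref{lower_func} is a piecewise affine (PWA) function of $s$ whenever $p \in \{1,\infty\}$, from which the vector-valued maps $\overline{f}$ and $\underline{f}$ inherit the PWA property componentwise. Recall that a function $g:\mathbb{R}^n \to \mathbb{R}$ is piecewise affine if there is a finite polyhedral partition of its domain on each cell of which $g$ coincides with an affine map, and that the class of PWA functions is closed under (i) addition of a constant, (ii) multiplication by a scalar, (iii) finite sums, and (iv) finite pointwise minima and maxima. Property (iv) is the crucial one: the surface across which a pointwise min/max switches between two affine pieces is the zero set of their difference, i.e., a hyperplane, so intersecting and refining the existing polyhedral cells with these hyperplanes again yields a finite polyhedral partition.

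The first step is to observe that for $p=\infty$ and any fixed center $c = \tilde{s}_{j,\ell}$ we have $\|s-c\|_\infty = \max_{1\le k \le n} \max\{ s_k - c_k,\ c_k - s_k\}$, a finite maximum of affine functions of $s$, hence PWA by (iv). For $p=1$ we have $\|s-c\|_1 = \sum_{k=1}^n \max\{ s_k - c_k,\ c_k - s_k\}$, a finite sum of PWA functions, hence PWA by (iii)--(iv). Thus in both cases $s \mapsto \|s - \tilde{s}_{j,\ell}\|_p$ is PWA.

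The second step assembles the individual terms. For each fixed index pair $(j,\ell)$, the map $s \mapsto (\tilde{y}_{j+1,\ell})^{(i)} + L^{(i)}_p \|s - \tilde{s}_{j,\ell}\|_p$ is a constant plus the scalar $L^{(i)}_p$ times a PWA function, hence PWA by (i)--(ii). Taking the pointwise minimum of these finitely many PWA functions (there are at most $\sum_{\ell=1}^N (T_\ell - n_y)$ of them) and adding the constant $\varepsilon^{(i)}_t$ yields $\overline{f}^i$, which is therefore PWA by (i) and (iv). Symmetrically, each $s \mapsto (\tilde{y}_{j+1,\ell})^{(i)} - L^{(i)}_p\|s - \tilde{s}_{j,\ell}\|_p$ is PWA, so $\underline{f}^i$ is a finite pointwise maximum of PWA functions minus a constant, hence PWA. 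Stacking the components shows that $\overline{f}$ and $\underline{f}$ are PWA, which is precisely the claim, and the corresponding abstraction is then termed piecewise affine.

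The step I expect to require the most care is justifying closure under pointwise min/max, property (iv), at full rigor. Although each summand $L^{(i)}_p \|s - \tilde{s}_{j,\ell}\|_p$ is convex, a pointwise minimum of convex PWA functions is generally nonconvex, so I cannot shortcut the argument via convexity; instead I would argue directly that the common refinement of the finitely many cell-partitions, further subdivided by the hyperplanes on which competing affine pieces tie, remains a finite polyhedral partition on which the min (resp.\ max) is affine. The remainder is bookkeeping, and the final observation that a vector field is PWA exactly when each of its scalar components is PWA is immediate.
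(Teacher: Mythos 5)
Your proof is correct and follows exactly the route the paper intends: the paper states this corollary without any proof, treating it as immediate from \eqref{upper_func}--\eqref{lower_func} once one notes that $\|\cdot\|_1$ and $\|\cdot\|_\infty$ are piecewise affine (maxima/sums of maxima of affine functions) and that finite pointwise minima and maxima of piecewise affine functions remain piecewise affine. Your explicit justification of the min/max closure property via common refinement of polyhedral partitions and tie hyperplanes simply fills in, correctly, the details the paper leaves implicit.
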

\begin{proposition} \label{Abs}
\moh{The abstraction approach described in Theorem \ref{LipschitzInterpolation} satisfies \emph{monotonicity}, in the sense that 
given two data sets $\mathcal{D}$ and $\mathcal{D}'$}, $\mathcal{D}' \subseteq \mathcal{D}$ \moh{implies that} the abstraction model $\mathcal{H}_{\mathcal{D}'}$ over-approximates the abstraction model $\mathcal{H}_\mathcal{D}$.
\end{proposition}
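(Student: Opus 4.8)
The plan is to unpack the phrase ``$\mathcal{H}_{\mathcal{D}'}$ over-approximates $\mathcal{H}_{\mathcal{D}}$'' into a pointwise sandwich of the bounding functions. Writing $\overline{f}_{\mathcal{D}},\underline{f}_{\mathcal{D}}$ and $\overline{f}_{\mathcal{D}'},\underline{f}_{\mathcal{D}'}$ for the abstraction pairs produced by Theorem \ref{LipschitzInterpolation} from the two data sets, the claim is equivalent to showing, componentwise and for every $s \in \mathcal{S}$,
\[
\underline{f}^i_{\mathcal{D}'}(s) \leq \underline{f}^i_{\mathcal{D}}(s) \leq \overline{f}^i_{\mathcal{D}}(s) \leq \overline{f}^i_{\mathcal{D}'}(s).
\]
The middle inequality is immediate, since by Theorem \ref{LipschitzInterpolation} both pairs sandwich $f$. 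Hence the work reduces to the two outer inequalities, which I would obtain directly from the closed-form expressions \eqref{upper_func}--\eqref{lower_func}.

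The core observation is a monotonicity of $\min$/$\max$ under set inclusion. Since $\mathcal{D}' \subseteq \mathcal{D}$, the index set of concatenated data pairs $(\tilde{s}_{j,\ell},\tilde{y}_{j+1,\ell})$ underlying $\overline{\mathcal{D}}'$ is a subset of the one underlying $\overline{\mathcal{D}}$. For each fixed $s$ and $i$, the term inside the minimum/maximum in \eqref{upper_func}--\eqref{lower_func} is a fixed function of the data pair, and the additive correction $\varepsilon^{(i)}_t = \varepsilon^{(i)}_w+\varepsilon^{(i)}_v+L^{(i)}_p\varepsilon_s$ depends only on the noise bounds and the Lipschitz constant, so it is identical for both data sets and cancels from the comparison. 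Minimizing the same collection of terms over the larger index set can only decrease (or keep equal) the optimal value, yielding $\overline{f}^i_{\mathcal{D}}(s) \leq \overline{f}^i_{\mathcal{D}'}(s)$; symmetrically, maximizing over the larger set can only increase it, yielding $\underline{f}^i_{\mathcal{D}}(s) \geq \underline{f}^i_{\mathcal{D}'}(s)$.

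Chaining these two bounds with the sandwich from Theorem \ref{LipschitzInterpolation} gives the displayed four-term inequality for all $i$ and all $s \in \mathcal{S}$, which is precisely the statement that $\mathcal{H}_{\mathcal{D}'}$ over-approximates $\mathcal{H}_{\mathcal{D}}$. There is no genuine analytical obstacle here: the result is a structural consequence of the abstraction being a $\min$/$\max$ envelope built over the data. The only point requiring care is the formalization step, namely verifying that $\mathcal{D}' \subseteq \mathcal{D}$ indeed induces nested index sets in the concatenated data $\overline{\mathcal{D}}$ (so that the optimization defining $\overline{f}_{\mathcal{D}'},\underline{f}_{\mathcal{D}'}$ ranges over a sub-collection of the terms appearing in $\overline{f}_{\mathcal{D}},\underline{f}_{\mathcal{D}}$), and confirming that $\varepsilon^{(i)}_t$ is data-independent so it plays no role in the direction of the inequalities.
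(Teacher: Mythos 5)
Your proof is correct and follows essentially the same route as the paper's: both arguments reduce the claim to the nesting of index sets induced by $\mathcal{D}' \subseteq \mathcal{D}$ and the fact that minimizing (resp.\ maximizing) the same collection of terms over a larger index set can only decrease (resp.\ increase) the value, giving $\overline{f}_{\mathcal{D}} \leq \overline{f}_{\mathcal{D}'}$ and $\underline{f}_{\mathcal{D}} \geq \underline{f}_{\mathcal{D}'}$. Your explicit remarks that $\varepsilon^{(i)}_t$ is data-independent and that the middle inequality of the four-term chain comes from Theorem \ref{LipschitzInterpolation} are small clarifications the paper leaves implicit, but the argument is the same.
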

\begin{proof} \moh{Let $J$ and $J'$ be the set of indices corresponding to data pairs included in $\overline{\mathcal{D}}$ and $\overline{\mathcal{D}'}$ (constructed from $\mathcal{D}$ and $\mathcal{D}'$) and $\forall i \in \{1\dots m\}$, $\overline{f}^i(s)$ and  $\overline{f'}^i(s)$, and $\underline{f}^i(s)$ and $\underline{f'}^i(s)$ are upper and lower  abstraction functions returned by the abstraction models $\mathcal{H}_{\mathcal{D}}$ and $\mathcal{H}_{\mathcal{D}'}$, respectively.} Then, $\mathcal{D}' \subseteq \mathcal{D} \implies \overline{\mathcal{D}'} \subseteq \overline{\mathcal{D}} \implies J' \subseteq J \implies \overline{f}^i(s)=$ $ \min_{j,\ell \in J} ((\yongn{\tilde{y}_{j+1, \ell}})^{(i)}+L^{(i)}_p\|s-\yongn{\tilde{s}_{j,\ell}}\|_p+\varepsilon^{(i)}_{t}) \leq  \min_{j,\ell \in J'} ((\yongn{\tilde{y}_{j+1,\ell}})^{(i)}+L^{(i)}_p\|s-\yongn{\tilde{s}_{j,\ell}}\|_p+\varepsilon^{(i)}_{t})= \overline{f'}^i(s) \implies $ $\overline{f}(s) \leq \overline{f}'(s)$ \moh{, where the second inequality holds since the two optimization problems have the same objective functions, but the constraint set of the latter is a subset of the former. By a similar argument,} $\underline{f}(s) \geq \underline{f}'(s)$. \moh{It follows from these two results that} $\mathcal{H}_{\mathcal{D}'}$ over-approximates $\mathcal{H}_\mathcal{D}$.
\end{proof}
\subsection {Data-Driven Model Invalidation Algorithm}
 
 We apply an optimization-based model invalidation approach to determine if the data-driven abstraction obtained in the previous section is incompatible with new observed length-$T$ output trajectory. 
In particular, we propose \yongn{a} model invalidation algorithm 
\yongn{for} abstraction model $\mathcal{H}$ \yongn{as follows:}

 
\begin{theorem} \label{theorem2}
Given an abstracted model $\mathcal{H}$, 
\yongn{a} \revise{new} \yongn{observed} length-$T$ output sequence \moha{$\{\tilde{y}^n_k\}_{k=0}^{k=T-1}$} \revise{\yong{invalidates} model $\mathcal{H}$, if the following feasibility problem is infeasible:}
 \begin{subequations}
 \begin{align}
 \nonumber & \hspace{-0.3cm} \text{Find} \ {y_k}, w_k, v_k \ \forall  k  \in\mathbb{Z}_{T-1}^0         \\[-0.em]
 \nonumber &\hspace{-0.3cm} \text{subject to} \  \forall  k  \in\mathbb{Z}_{T-1}^{n_y}, \forall  \ell  \in\mathbb{Z}_{N}^{1}, \forall (\yongn{\tilde{s}_{j,\ell}},\yongn{\tilde{y}_{j+1,\ell}})\in \overline{\mathcal{D}}_\ell: \\
 & \hspace{0.3cm} y_{k+1} \leq \yongn{\tilde{y}_{j+1, \ell}}+L_p||s_k-\yongn{\tilde{s}_{j, \ell}}||_p +\varepsilon_t+w_k, \label{constraint1'}\\[-0.em] 
 & \hspace{0.3cm} y_{k+1} \geq \yongn{\tilde{y}_{j+1, \ell}}-L_p||s_k-\yongn{\tilde{s}_{j, \ell}}||_p -\varepsilon_t+w_k, \label{constraint2'}\\[-0.em] 
   & \hspace{0.3cm} \forall k \in \mathbb{Z}_{T-1}^0:  
  \tilde{y}^n_k = y_k+v_k, \ \underline{y}_k \leq y_k \leq \overline{y}_k,\\ 
    & \hspace{0.3cm}-\varepsilon_w \mathds{1}_m\leq w_k\leq \varepsilon_w \mathds{1}_m, -\varepsilon_v \mathds{1}_m\leq v_k\leq \varepsilon_v \mathds{1}_m,
 \end{align}
 \end{subequations}
 where $\overline{\mathcal{D}_\ell} = \{(\yongn{\tilde{s}_{j,\ell}}, \yongn{\tilde{y}_{j+1, \ell}})|j=n_y, \cdots, T_\ell-1\}$ is one trajectory of $\overline{\mathcal{D}}$ and  $\overline{\mathcal{D}}=\bigcup_{\ell=1}^N \overline{\mathcal{D}}_\ell$ is the given sampled data set from which we obtain a data-driven abstraction $\mathcal{H}$ with $\underline{y}_k$ and $\overline{y}_k$ as given bounds of $y_k$, $s_k=[y_k,\cdots,y_{k-n_y+1}]^T$, 
 $\varepsilon_t=[\varepsilon_t^{(1)}, \hdots, \varepsilon_t^{(m)}]^\top$,  $L_p=[L_p^{(1)}, \hdots, L_p^{(m)}]^\top$ \yongn{and $\varepsilon_t^{(i)}$ for all $i \in \{1,\hdots,m\}$ is as defined in Theorem 1}.
\end{theorem}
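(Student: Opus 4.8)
The plan is to show that the feasibility problem in the statement exactly encodes membership of the candidate trajectory in $\mathcal{B}^T(\mathcal{H})$; infeasibility then certifies $\{\tilde{y}^n_k\}_{k=0}^{T-1} \notin \mathcal{B}^T(\mathcal{H})$, which is precisely what it means to invalidate $\mathcal{H}$. Starting from the definition of $\mathcal{B}^T(\mathcal{H})$, a trajectory lies in the behavior iff there exist states $y_k \in \mathcal{Y}$ and noise signals $w_k \in \mathcal{W}$, $v_k \in \mathcal{V}$ for which the governing relations hold. Combining the dynamics \eqref{eq:system} with the abstraction sandwich \eqref{eq1} eliminates the unknown $f$ and yields, componentwise, $\underline{f}^i(s_k) + w_k^{(i)} \le y_{k+1}^{(i)} \le \overline{f}^i(s_k) + w_k^{(i)}$ for every $i$ and every $k$, where $s_k = [y_k^\top, \dots, y_{k-n_y+1}^\top]^\top$ is formed from the decision variables rather than from the data.

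The key step is to rewrite these two inequalities using the explicit expressions for $\overline{f}^i$ and $\underline{f}^i$ from Theorem~\ref{LipschitzInterpolation}, i.e.\ \eqref{upper_func}--\eqref{lower_func}. Since $\overline{f}^i(s_k)$ is a minimum over all data indices $(j,\ell)$, the scalar inequality $y_{k+1}^{(i)} \le \overline{f}^i(s_k) + w_k^{(i)}$ is equivalent to requiring $y_{k+1}^{(i)} \le (\tilde{y}_{j+1,\ell})^{(i)} + L_p^{(i)}\|s_k - \tilde{s}_{j,\ell}\|_p + \varepsilon_t^{(i)} + w_k^{(i)}$ for every pair $(\tilde{s}_{j,\ell}, \tilde{y}_{j+1,\ell}) \in \overline{\mathcal{D}}_\ell$ and every $\ell$, which is exactly \eqref{constraint1'}. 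Dually, because $\underline{f}^i(s_k)$ is a maximum over the same indices, $y_{k+1}^{(i)} \ge \underline{f}^i(s_k) + w_k^{(i)}$ is equivalent to the universally quantified constraint \eqref{constraint2'}. This min/max-to-``for all'' passage is the heart of the argument: it converts the two compact sandwich inequalities into the data-indexed constraint family that is amenable to a tractable feasibility check.

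It then remains to match the auxiliary constraints. The output relation \eqref{eq:output} becomes $\tilde{y}^n_k = y_k + v_k$, while the membership conditions $y_k \in \mathcal{Y}$, $w_k \in \mathcal{W}$ and $v_k \in \mathcal{V}$ translate directly into the box constraints $\underline{y}_k \le y_k \le \overline{y}_k$, $-\varepsilon_w \mathds{1}_m \le w_k \le \varepsilon_w \mathds{1}_m$ and $-\varepsilon_v \mathds{1}_m \le v_k \le \varepsilon_v \mathds{1}_m$ by the definitions of $\mathcal{W}$, $\mathcal{V}$ and the bounds on $\mathcal{Y}$. Collecting everything, the feasible set of the optimization problem coincides with the set of witnesses certifying $\{\tilde{y}^n_k\}_{k=0}^{T-1} \in \mathcal{B}^T(\mathcal{H})$; hence the problem is feasible iff the trajectory belongs to $\mathcal{B}^T(\mathcal{H})$, and by contraposition infeasibility implies invalidation.

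The step I expect to be the main obstacle is the careful handling of quantifiers in the second paragraph: the behavior is defined by an \emph{existential} quantifier over the witnesses $(y_k, w_k, v_k)$, whereas the constraint family \eqref{constraint1'}--\eqref{constraint2'} is \emph{universally} quantified over the data $(\tilde{s}_{j,\ell}, \tilde{y}_{j+1,\ell})$. One must verify that a single witness $(y_k, w_k, v_k)$ simultaneously satisfies all data-indexed constraints, which holds precisely because the min/max aggregation in \eqref{upper_func}--\eqref{lower_func} already encodes ``for all data,'' and that $s_k$ is consistently treated as a function of the decision variables, so that the nonconvex term $\|s_k - \tilde{s}_{j,\ell}\|_p$ appears identically in the abstraction and in the feasibility constraints.
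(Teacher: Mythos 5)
Your proposal is correct and follows essentially the same route as the paper: write down the feasibility problem for membership in $\mathcal{B}^T(\mathcal{H})$ using the sandwich inequalities, then substitute the explicit min/max expressions from Theorem~\ref{LipschitzInterpolation} so that each sandwich constraint becomes the data-indexed family \eqref{constraint1'}--\eqref{constraint2'}, and conclude by contraposition. You are in fact more explicit than the paper about the min-to-``for all'' equivalence and the quantifier bookkeeping, which the paper compresses into the single claim that the two constraints are equivalent.
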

\begin{proof}
By the definition of model invalidation for abstracted model $\mathcal{H}$, we know that 
the abstraction  is invalidated if the following problem is infeasible:
 \begin{subequations} \label{MI_o}
 \begin{align} \tag{MI}
 \nonumber & \hspace{-0.3cm} \text{Find} \ {y_k}, w_k, v_k\ \forall  k  \in\mathbb{Z}_{T-1}^0         \\[-0.em]
 &\hspace{-0.3cm} \text{subject to} \  \forall  k  \in\mathbb{Z}_{T-1}^{n_y}: 
  y_{k+1} \leq \overline{f}(s_k) +w_k,\label{constraint_1}\\[-0.em]
 & \hspace{3.15cm} y_{k+1} \geq \underline{f}(s_k) +w_k,\label{constraint_2}\\[-0.em]
  & \hspace{0.5cm} \forall k \in \mathbb{Z}_{T-1}^0:  
  \tilde{y}^n_k = y_k+v_k, \ \underline{y}_k \leq y_k \leq \overline{y}_k,\\ 
    & \hspace{0.5cm}-\varepsilon_w \mathds{1}_m\leq w_k\leq \varepsilon_w \mathds{1}_m, -\varepsilon_v \mathds{1}_m\leq v_k\leq \varepsilon_v \mathds{1}_m.
 \end{align}
 \end{subequations}
 
Since the upper bound of the abstraction is given by \eqref{upper_func}, constraint \eqref{constraint_1} is equivalent to \eqref{constraint1'}. 
Similarly, by \eqref{lower_func}, \eqref{constraint2'} is equivalent to \eqref{constraint_2}. Thus, two optimization problems are equivalent. If the above optimization problem is infeasible, it means that the output sequence $\{\tilde{y}^n_k\}_{k=0}^{k=T-1}$ cannot be consistent with the length-$T$ behavior of $\mathcal{H}$, i.e., $\{\tilde{y}^n_k\}_{k=0}^{k=T-1} \notin \mathcal{B}^T(\mathcal{H})$, hence the model is invalidated.
\end{proof}

From Corollary \ref{linear}, by choosing the suitable vector norm, i.e., $p=1$ or $\infty$, the optimization problem in Theorem \ref{theorem2} is a \jin{mixed integer} linear program/feasibility problem. 

\begin{remark}\label{rem:1} By Proposition \ref{Abs}, we could choose to only use a strict subset of the sampled data, $\overline{\mathcal{D}}_{\ell} (\yongn{\tilde{y}^n_k})\subset \overline{\mathcal{D}}_\ell$, without violating the guarantees of Theorem \ref{theorem2}, where $\overline{\mathcal{D}}_{\ell} (\yongn{\tilde{y}^n_k})$ may be a function of the new observed data $\yongn{\tilde{y}^n_k}$. In this case, we will replace \eqref{constraint1'} and \eqref{constraint2'} with the following constraints:
\begin{align*}
y_{k+1} \leq \yongn{\tilde{y}_{j+1,\ell,k}}+L_p||s_k-\yongn{\tilde{s}_{j,\ell,k}}||_p +\varepsilon_t+w_k,\\
y_{k+1} \geq \yongn{\tilde{y}_{j+1,\ell,k}}-L_p||s_k-\yongn{\tilde{s}_{j,\ell,k}}||_p -\varepsilon_t+w_k,
\end{align*}
for all $(\yongn{\tilde{s}_{j,\ell,k}},\yongn{\tilde{y}_{j+1,\ell,k}}) \in \overline{\mathcal{D}}_{\ell} (\yongn{\tilde{y}^n_k})$. 
The advantage of this ``downsampling" is that the computational time can be reduced but at the cost of the abstracted model precision and thus the ability of the data-driven model invalidation algorithm to eliminate inconsistent models. We will explore this downsampling strategy in the simulation section.
\end{remark}

\vspace{-0.3cm}
\subsection{Estimation of Lipschitz Constant} \vspace{-0.05cm}
In previous sections, the Lipschitz constants are assumed to be given. 
\yong{In the case when the constants are not known}, we will 
\yongn{estimate the Lipschitz constant from the noisy} sampled data set $\overline{\mathcal{D}}=\{(\yongn{\tilde{s}_j}, \yongn{\tilde{y}_{j+1}})|j=n_y, \cdots, N-1\}$ as follows: 
\yongn{\begin{align} \label{estimateL}
\hat{L}^{(i)}_p = \max\big\{0,\max_{j\neq k} \frac{|(\yongn{\tilde{y}_{j+1}})^{(i)} -(\yongn{\tilde{y}_{k+1}})^{(i)}|-2\varepsilon_v}{||\yongn{\tilde{s}_j}-\yongn{\tilde{s}_k}||_p + 2 \varepsilon_s}\big\}.
\end{align}
This is an extension of the lazy approach in \cite[Section 4.3.2]{calliess2014conservative} to the case where both the input and output data, i.e., $\tilde{s}_j$ and $\tilde{y}_{j+1}$ for all $j$, are corrupted by bounded noise. The above expression can be simply obtained from the definition of Lipschitz continuity and the use of triangle inequality.}

Since the accuracy of $L^{(i)}_p$ is crucial for the results in the previous section, we proceed to find some guarantees that we obtain the right estimate with high probability. To achieve this, we leverage a classical result on probably approximately correct (PAC) learning for linear separators, which is summarized below: \vspace{-0.1cm}
\begin{definition}[Linear Separators\cite{kearns1994introduction}] \label{def:linear} For $\Gamma \subset \mathbb{R} \times \mathbb{R}$, a linear separator is a pair $(a, b)\in \mathbb{R}^2$ such that
\begin{align} \label{eq:linsep}
\forall(x, y) \in \Gamma: \ x \leq ay +b.
\end{align}
\end{definition}\vspace{-0.25cm}

\begin{proposition}[PAC Learning \cite{kearns1994introduction}] 
Let $\epsilon, \delta \in \mathbb{R}^+$. If number of sampling points is $N\geq\frac{1}{\epsilon}\ln\frac{1}{\delta}$, where the sample points $\Gamma$  are drawn from a distribution $\mathcal{P}$, then, with probability greater than $1 - \delta$, a linear separator $(a,b)$ has an error $\texttt{err}_{\mathcal{P}}$ of less than $\epsilon$, where the error of a pair $(a, b)$ is defined as $\texttt{err}_{\mathcal{P}}(a,b)=\mathcal{P}({(x,y)\in\Gamma|x>ay+b})$.
\end{proposition}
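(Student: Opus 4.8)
The plan is to cast the claim as the classical realizable PAC argument for a one-sided concept class and to reduce the two-parameter family of separators to a single critical-region estimate, so that no union bound over the (infinite) hypothesis space is needed and the VC-dimension-free rate $\frac{1}{\epsilon}\ln\frac{1}{\delta}$ emerges. Throughout I would work in the realizable setting implicit in Definition \ref{def:linear}: there is a target separator that every drawn sample respects, and the learner returns a separator $(a,b)$ that is \emph{consistent} with the sample, i.e., every observed $(x,y)$ satisfies $x \leq ay+b$. The quantity to control is $\texttt{err}_{\mathcal{P}}(a,b) = \mathcal{P}(\{(x,y): x > ay+b\})$, the mass of the half-plane on the violating side of the returned separator.

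First I would fix the learner to output the \emph{tightest} consistent separator, namely the one whose violating region is inclusion-minimal among all separators consistent with the sample; this is the step that replaces an argument over all hypotheses by an argument about one region. The key monotonicity observation is that the violating regions $\{x > ay+b\}$ of consistent separators are ordered by inclusion as the boundary is pushed toward the sample, so the returned region is contained in that of any other consistent separator and, in particular, contains no sample point.

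Next I would introduce a \emph{critical region}: starting from the target boundary and shrinking the violating half-plane until its $\mathcal{P}$-mass equals exactly $\epsilon$ (taking a supremum over nested regions of mass at most $\epsilon$ to handle atoms), call this region $R$ with $\mathcal{P}(R)=\epsilon$. The decisive implication is that if the returned separator has error exceeding $\epsilon$, then its violating region must contain $R$; since that region is sample-free by consistency, \emph{no} sample landed in $R$. Because the $N$ samples are i.i.d.\ from $\mathcal{P}$, the probability of this event is $(1-\epsilon)^N$.

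Finally I would apply the elementary bound $(1-\epsilon)^N \leq e^{-\epsilon N}$ and require $e^{-\epsilon N} \leq \delta$, which rearranges to $N \geq \frac{1}{\epsilon}\ln\frac{1}{\delta}$; hence the bad event has probability at most $\delta$, so with probability at least $1-\delta$ the returned separator satisfies $\texttt{err}_{\mathcal{P}}(a,b) < \epsilon$. The main obstacle I anticipate is the reduction step: rigorously justifying that a single critical region $R$ of mass $\epsilon$ suffices, i.e., that the error regions of the consistent separators form a totally ordered (nested) family so that the event "error $>\epsilon$" forces $R$ to be uncovered, rather than having to take a union bound over the two-parameter continuum of separators. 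Establishing this monotone structure (and handling the case where $\mathcal{P}$ has atoms, so that a region of mass exactly $\epsilon$ may fail to exist and one instead works with the supremal region of mass at most $\epsilon$) is where the care is needed; the probabilistic estimate itself is then immediate.
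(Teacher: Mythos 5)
The paper offers no proof of this proposition: it is imported verbatim from \cite{kearns1994introduction} as a known result, so there is nothing internal to compare your argument against, and the review below evaluates your proposal on its own merits. Your plan is the classical single-critical-region argument from realizable PAC learning (the ``threshold'' argument), and its probabilistic core is sound: if every consistent hypothesis with error exceeding $\epsilon$ must have a violating region containing a fixed set $R$ with $\mathcal{P}(R)=\epsilon$, then the bad event requires all $N$ i.i.d.\ samples to miss $R$, which happens with probability $(1-\epsilon)^N \le e^{-\epsilon N} \le \delta$ once $N \ge \frac{1}{\epsilon}\ln\frac{1}{\delta}$.

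The gap is exactly the one you flag at the end, and it is not merely a matter of care --- it cannot be closed for the class as stated. For the \emph{two}-parameter family of separators $x \le ay+b$, the violating half-planes $\{(x,y): x > ay+b\}$ of distinct consistent hypotheses are \emph{not} nested: two boundary lines with different slopes cross, so neither violating region contains the other. Consequently there is in general no inclusion-minimal consistent separator, no totally ordered family of error regions, and no single critical region $R$ whose being hit by a sample forces \emph{every} consistent $(a,b)$ to have error at most $\epsilon$. Handling the full class requires a growth-function/VC argument (half-planes in $\mathbb{R}^2$ have VC dimension $3$), which introduces an additional $\frac{1}{\epsilon}\log\frac{1}{\epsilon}$-type term and different constants, so the bare bound $N\ge\frac{1}{\epsilon}\ln\frac{1}{\delta}$ is not recovered by your route. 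That said, the paper only ever invokes the proposition with $b=0$, and in its application the second coordinate $y=\|\tilde{s}_j-\tilde{s}_k\|_p+2\varepsilon_s$ is strictly positive; there the separators form a one-parameter family whose violating regions $\{x>ay,\ y>0\}$ are totally ordered in $a$, your nesting and critical-region construction are valid, and the stated sample bound follows exactly as you outline. So your argument proves the special case the paper actually needs, but not the proposition as literally stated for general $(a,b)$.
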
\vspace{-0.05cm}

From Definition \ref{def:linear}, it is easy to verify that our \yongn{$\hat{L}_P^{(i)}$ estimate}  \moha{in} \eqref{estimateL} is a special case of the linear separator \yongn{in} \eqref{eq:linsep} with $b=0$. Thus, the estimated $\hat{L}_P^{(i)}$ using \eqref{estimateL} is guaranteed \moha{to be close to the true Lipschitz constant of the original unknown function} with high probability if we have sufficient data. 
\vspace{-0.05cm}
\section{Simulation and Discussion} \vspace{-0.05cm}
In this section, we demonstrate the effectiveness of the proposed methods for data-driven abstraction and model invalidation for swarm intent/formation identification. Since the computational complexity for abstraction and model invalidation will be high when dealing with large data sets, we also consider downsampling as described in Remark \ref{rem:1}. The downsampling strategies we propose involve taking only \yongn{a local subset of the points in the dataset into consideration} 
instead of all \yongn{data}, 
and thereby, the computational time could be reduced. 
In particular, we will focus on grid\yongn{-based}, $k$-means and $k$-nearest neighbors (kNN) methods for picking \yongn{the local subset} 
and compare their performances. All simulations are implemented in MATLAB on a 2.2 GHz machine with 16 GB of memory. Yalmip \cite{YALMIP} and Gurobi \cite{gurobi} are used for implementation 
of both data-driven abstraction and model invalidation algorithms. 

\vspace{-0.15cm}
\subsection{System Dynamics and Data Set Generation} \label{Dynamics}
\vspace{-0.1cm}
In this section, we describe the dynamics of the swarm intent/formation models and 
\yongn{expound} the data set generation process that we employ for the simulations.

\subsubsection{System Dynamics} 
The dynamics of each swarm agent is described by the Dubins Car model \cite{Dubins1957}: 
\begin{subequations} \label{eq:dynamics}
\begin{align}
&p_{x,k+1}=p_{x,k}+u_s\cos(\theta_k)\delta t +w_{px,k}, \label{func_x}\\ 
& p_{y,{k+1}}=p_{y,k}+u_s\sin(\theta_k)\delta t +w_{py,k}, \label{func_y}\\ 
& \theta_{k+1} = \theta_k +\frac{u_s}{L}\tan(u_\phi)\delta t +w_{\theta,k}, \label{func_theta}
\end{align}
\end{subequations}
where the system states are $p_x$ and $p_y$ that represent the $(x,y)$-position of the agent and $\theta$ as the heading angle of the agent. 
$L$ is the length between the front and rear tires and is set to $1.5m$, $u_s$ is the speed of the agent and is assumed to be $1\, m/s$, 
and sampling time $\delta t$ is set to $0.1 s$,  while $w_{px,k}$, $w_{py,k}$ and $w_{\theta,k}$ represent process noise or heterogeneity among the agents and are set to be bounded by $|w_{px,k}| \le 0.00025$, $|w_{py,k}| \le 0.00025$ and $|w_{\theta,k}| \le 0.0001$, respectively. 
In addition, a reference signal $\theta_{desired,k}$ based on the centroid of the swarm formation $(c_x, c_y)$ is assumed to be given: 
\begin{align}
\theta_{desired,k}=\arctan2(c_y-p_{y,k}, c_x-p_{x,k}),
\end{align}
which the agents utilize for feedback control according to 
the following proportional control law:
\begin{align}\label{controller}
u_\phi = \min(\frac{\pi}{8}, \max(-\frac{\pi}{8}, K_p(\theta_{desired}-\theta))),
\end{align}
where the saturation functions ensure that the steering angle of each agent never exceeds $[-\frac{\pi}{8}, \frac{\pi}{8}] \ rad$. 

\subsubsection{Data Set Generation} 

We consider two swarm intents or formations, which are dependent on the choice of the value of $K_p$. When $K_p = 0.5$, the swarm \emph{intends} to move towards the centroid of the swarm and in our simulation examples, the sampled data set $\mathcal{D}$ (with size $|\mathcal{D}|$) is collected using this model (see Fig. \ref{fig:D}). On the other hand, when $K_p = -0.5$, the swarm \emph{intends} to move \moha{away from} the centroid. In our simulation examples, we use this model to generate the new observed trajectories $\{\tilde{y}^n_k\}_{k=0}^{T-1}$ (see Fig. \ref{fig:y}) to invalidate the abstraction based on the sampled data set $\mathcal{D}$. 
\begin{figure}[t] 
\centering \vspace{-0.1cm}\hspace{-0.1cm}
\subfloat[Model for generating $\mathcal{D}$ \yongn{($K_p$~\,\,\, $=0.5$)} \label{fig:D}]{\includegraphics[width=0.2375\textwidth,trim=1mm 3mm 2mm 6.mm,clip]{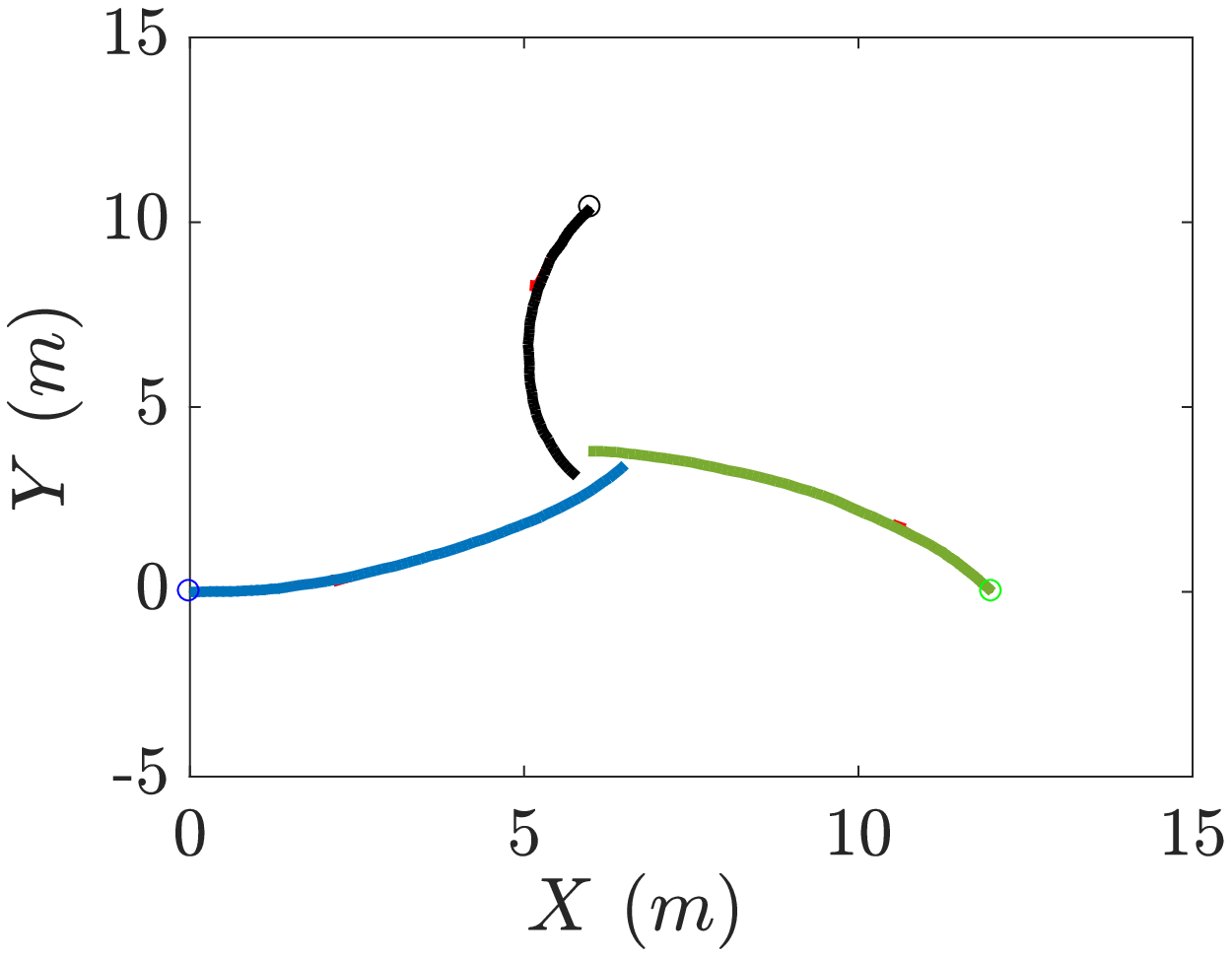}\hspace{-0.1cm}}
\subfloat[Model for generating new observed trajectory $\{\tilde{y}^n_k\}_{k=0}^{T-1}$ \yongn{($K_p=-0.5$)} \label{fig:y}]{~\includegraphics[width=0.2375\textwidth,trim=2mm 2mm 2mm 7.mm,clip]{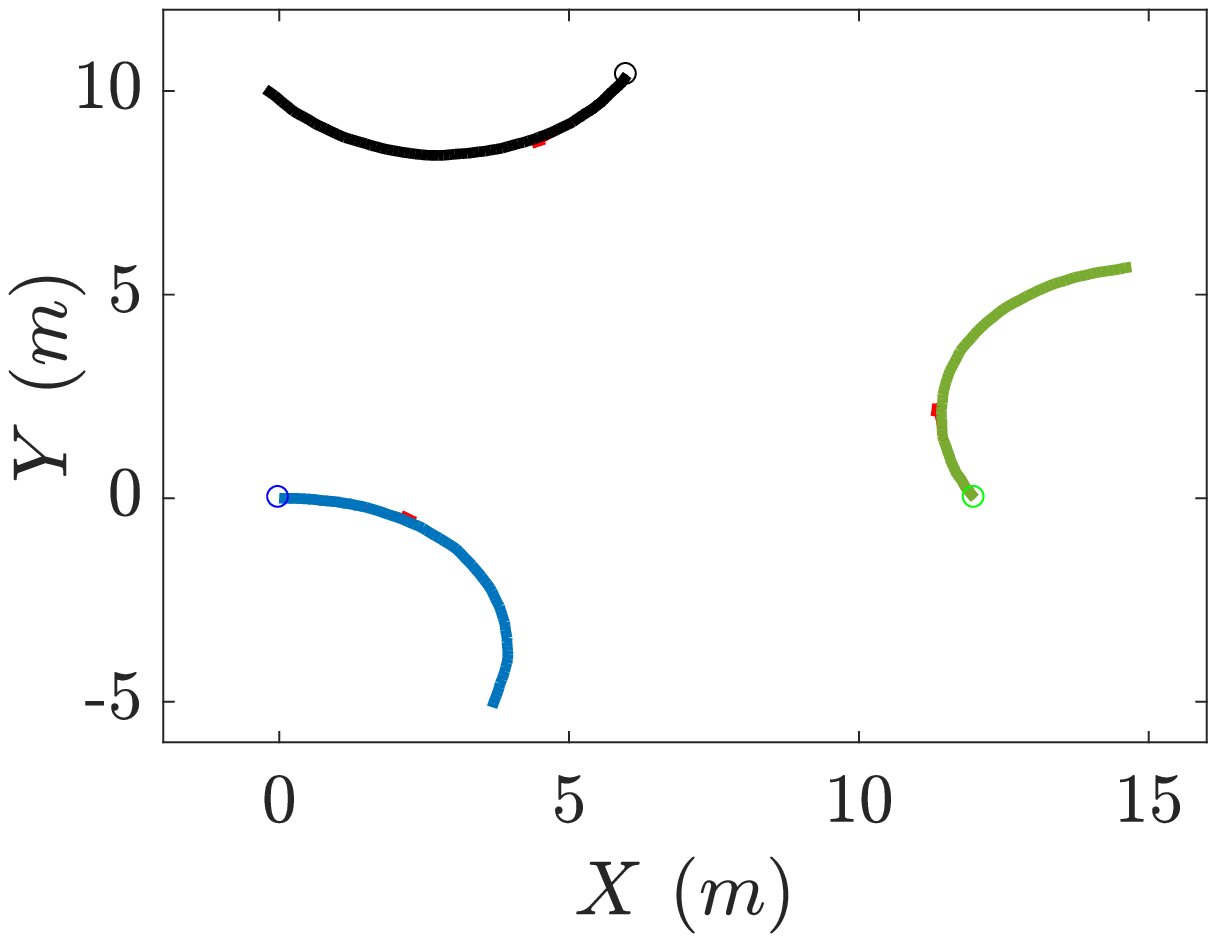}~}%

\caption{Illustration of two swarm formation/intention models.}%
\label{figure1}\vspace{-0.5cm}
\end{figure}

In both data sets, we assume that we have 3 agents and noisy observations of  $y_k = [p_{x,k}, p_{y,k}, \theta_k]^T$ are available with the following initial conditions for the 3 agents: $[0, 0, 0]$, $[12, 0, \frac{2\pi}{3}]$ and $[6, 6\sqrt{3}, -\frac{2\pi}{3}]$ representing $p_{x,0}$, $p_{y,0}$ and $\theta_0$, respectively. Moreover, we chose $T=16$ and generated 20 random sequences \moha{of} trajectories for the simulations.
 

\vspace{-0.2cm}
\subsection{Simulation Results}
\vspace{-0.1cm}
\subsubsection{Data-Driven Abstraction} First, we applied the data-driven abstraction algorithm to the nonlinear functions in \eqref{eq:dynamics}, i.e., the $u_s \cos(\theta_k)$ and $u_s \sin(\theta_k)$ terms.
For illustrative purposes, in Figure \ref{figure1}, we show the data-driven abstraction results of the function $f(u, \theta) = u\cos(\theta)$ defined on the domain $u\times \theta\in[-2, 2]\times[0,2\pi]$, where we additionally assume that $u$ is measured with a noise bound of 0.1, and similarly, the bounds of the noise of measuring $\theta$ and $f(u, \theta)$ are also assumed to be 0.1. As expected, the resulting abstraction is indeed an over-approximation of the unknown nonlinear function on the defined domains of interest. Moreover, as we increase the number of sampled data point\yongn{s} $\mathcal{D}$, the more accurate the over-approximation is.



\subsubsection{Model Invalidation Results}
Next, using the data-driven abstractions of the unknown dynamics of \eqref{eq:dynamics} based on the sampled data $\mathcal{D}$, we compare the performances of the model invalidation algorithm when using all the sample data (i.e., no downsampling) as we vary the size of the data set \yong{by selecting and only using a subset of the data set $\mathcal{D}$ for each time step as described in Remark \ref{rem:1}. Specifically,} 
we consider three \yong{\emph{heuristic}} downsampling approaches \yong{and compare their performances and associated CPU times.}
\paragraph{Without Downsampling}
First, we consider the case without downsampling, i.e., the entire data set $\mathcal{D}$ is used for model invalidation. Specifically, we vary the size of the data set, $|\mathcal{D}|$, from 16 to 208, and compare the ability of the data-driven model invalidation algorithm to invalidate the (wrong) model with $K_p=0.5$ using 20 randomly generated noisy new observed trajectories from the true model with $K_p=-0.5$. As shown in Figure \ref{fig:all}, when the data set size is small, the (wrong) model is never invalidated by any of the 20 new trajectories, but when the data set size is increased, the model is invalidated by more new trajectory data.

\begin{figure}[t] 
\centering
\includegraphics[width=0.325\textwidth,trim=0mm 2mm 0mm 0mm,clip]{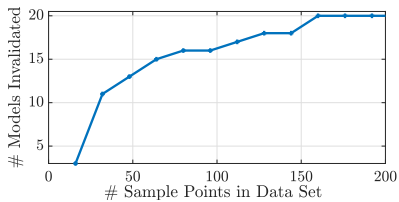}

\caption{Performance of \yongn{the} data-driven modeling invalidation algorithm as the prior sampled data size is increased. \label{fig:all}}%
\label{figure1}\vspace{-0.25cm}
\end{figure}

\paragraph{With Downsampling using the Grid\yongn{-Based} Method}
The first downsampling strategy we consider uses a grid\moha{-based} method to select a subset of  dataset $\mathcal{D}$ (see Remark \ref{rem:1} for implementation details). The idea is to uniformly grid the 
\yongn{domain} into \yongn{hyperrectangles} and assign sampled data to each region. We then \moha{pick} the region based on the new data $\tilde{y}^n_k$ and use only the sampled data assigned to this region in the constraints described in Remark \ref{rem:1}. 
Moreover, since the states near the boundaries of the region may be poorly 
approximated, we 
further consider 
\yong{extra} sample\moha{d} data from neighboring regions by randomly adding a few more 
sampled data points. 
This downsampling method \yongn{has the benefit} \moha{that} the associated region for each new data $\tilde{y}^n_k$ can be easily found ($\mathcal{O}(1)$), \yongn{but scalability is an issue since} 
the number of regions grows exponentially with the state 
dimension.

Figure \ref{fig:grid} depicts the performance of this downsampling strategy from a data set of size 160 in terms of the number of invalidated models as well as its mean CPU time over 20 new trajectories. As the grid size increases, the size of the downsampled data set decreases, resulting in worse invalidation performance but the mean CPU time does initially decrease before increasing again. This may be 
\yongn{due to a} 
 trade-off between decreased space complexity of the sample points \moha{and} increased space complexity of the grid points. 

\vspace{-0.cm}
\paragraph{With Downsampling using the $k$-Means Method}
Next, we consider \moha{a}  
clustering\moha{-based downsampling approach,} known as \yongn{the} $k$-means algorithm  \cite{bishop:2006:PRML}. \moha{It} is a well-known unsupervised learning algorithm that groups similar data into clusters. Using this, we partition the original data set $\mathcal{D}$ into several clusters and only use the sampled data associated with the \moha{closest} cluster \moha{to} the new data $\tilde{y}^n_k$ in the constraints described in Remark \ref{rem:1}. 
Further, \moha{to address the potentially poor over-approximation near the boundaries of the cluster}, one additional random data point \moha{is picked} from each \moha{of the} other cluster\moha{s}.  
The scalability of this method is better than the grid\moha{-based} method, since the number of regions depends only on the number of clusters. \moha{However, computing the clusters might be} computationally \moha{expensive}. 

Figure \ref{fig:cluster} depicts the ability of this downsampling strategy 
to invalidate the model as well as its mean CPU time over 20 new trajectories. As the cluster size decreases, the size of the downsampled data set decreases, thus the  invalidation performance \moha{becomes} worse, but the CPU time improves.

\begin{figure}[t] 
\centering
\includegraphics[width=0.52\textwidth,trim=20mm 2mm 9mm 0.5mm,clip]{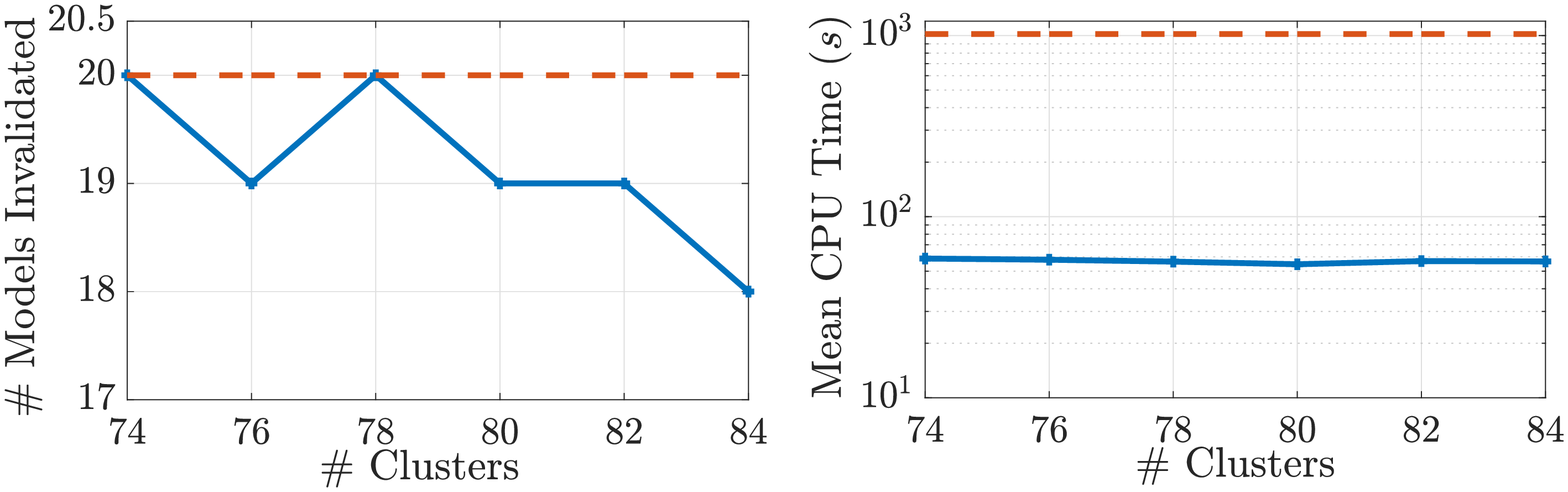}

\caption{Performance \yongn{with} 
$k$-Means based downsampling. The blue solid lines represent the performance of the $k$-Means based method and the red dash lines are the reference performance when using the entire data\yongn{set} without downsampling.}%
\label{fig:cluster}\vspace{-0.475cm}
\end{figure}


\vspace{-0.01cm}
\paragraph{With Downsampling using the $k$-Nearest Neighbors (kNN) Method}
Finally, we consider an \revise{kNN method for downsampling, inspired by \cite[Section 4.2.6]{calliess2014conservative}}, where we only use the $k$ closest data points to the new data $\tilde{y}^n_k$ in the constraints described in Remark \ref{rem:1}. 
This strategy is a form of \emph{lazy learning} with no required preprocessing nor extra memory.  
Since only $k$ points are taken into consideration for each $\tilde{y}^n_k$,  computational time and number of constraints for the resulting algorithm can both decrease.  
However, the complexity of kNN algorithm, $\mathcal{O}(nd+kn)$, indicates that it may be time-consuming to formulate the 
model invalidation problem for a large data set \yongn{with kNN-based downsampling}. 

\moha{The results in} Figure \ref{fig:knn} \moha{indicate that} the \moha{kNN-based} downsampling strategy \moha{significantly} improves \moha{the} mean CPU time, \yongn{when} compared to \moha{the \yongn{reference} performance}. Moreover, as the number of 
nearest neighbors increase\moha{s}, the performance of the model invalidation algorithm improves at the cost of a slight increase in the mean CPU time. 

\yong{In conclusion, we observed that the heuristic downsampling techniques significantly reduce the CPU times but deteriorate the model invalidation performance. A principled analysis of the trade-off between CPU times and  performance will be a subject of future research.}
\begin{figure}[t] 
\centering
\includegraphics[width=0.52\textwidth,trim=20mm 2mm 9mm 0.5mm,clip]{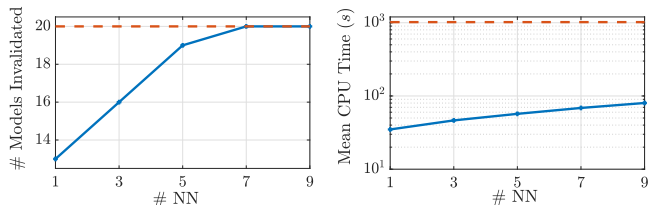}

\caption{Performance \yongn{with} 
kNN-based downsampling. The blue solid lines represent the performance of the kNN-based method and the red dash lines are the reference performance when using the entire data\yongn{set} without downsampling.}%
\label{fig:knn}\vspace{-0.2cm}
\end{figure}

\vspace{-0.1cm}
\section{Conclusion}
We proposed a data-driven approach for model invalidation for unknown Lipschitz continuous systems where only noisy sampled data is available. 
 In the first step, we introduced an algorithm to find upper and lower Lipschitz functions that over-approximate/abstract our unknown original Lipschitz continuous dynamics from noisy data. Then, we proposed a data-driven model Invalidation algorithm for 
 determining the (in)compatibility of the abstracted data-driven model with a new observed length-$T$ output trajectory, 
 that we showed is equivalent to a tractable linear feasibility program. 
Finally, we applied our proposed approach to an example of swarm intent identification and compared several downsampling strategies to reduce the  
computational complexity of the proposed algorithm.  

\bibliographystyle{unsrt}
\bibliography{biblio}

\begin{thebibliography}{10}

\bibitem{Tabuada2009}
P.~Tabuada.
\newblock {\em Verification and control of hybrid systems: a symbolic
  approach}.
\newblock Springer, 2009.

\bibitem{smith1992model}
R.S. Smith and J.C. Doyle.
\newblock Model validation: A connection between robust control and
  identification.
\newblock {\em IEEE Trans. on Automatic Control}, 37(7):942--952, 1992.

\bibitem{harirchi2017guaranteed}
F.~Harirchi, S.Z. Yong, and N.~Ozay.
\newblock Guaranteed fault detection and isolation for switched affine models.
\newblock In {\em IEEE Conference on Decision and Control}, pages 5161--5167.
  IEEE, 2017.

\bibitem{Jin2019CDC}
Z.~Jin, Q.~Shen, and S.Z. Yong.
\newblock Optimization-based approaches for affine abstraction and model
  discrimination of uncertain nonlinear systems.
\newblock In {\em IEEE Conference on Decision and Control}, pages
  \hspace{--0.05cm}7976--7981, \hspace{-0.035cm}2019.

\bibitem{bianchi2010robust}
F.D. Bianchi and R.S. S{\'a}nchez-Pe{\~n}a.
\newblock Robust identification/invalidation in an {LPV} framework.
\newblock {\em International Journal of Robust and Nonlinear Control},
  20(3):301--312, 2010.

\bibitem{sznaier2003lmi}
M.~Sznaier and M.C. Mazzaro.
\newblock An {LMI} approach to control-oriented identification and model
  (in)validation of {LPV} systems.
\newblock {\em IEEE Trans. on Automatic Control}, 48(9):1619--1624, 2003.

\bibitem{Prajna2006}
S.~Prajna.
\newblock Barrier certificates for nonlinear model validation.
\newblock {\em Automatica}, 42(1):117--126, 2006.

\bibitem{ozay2014convex}
N.~Ozay, M.~Sznaier, and C.~Lagoa.
\newblock Convex certificates for model (in)validation of switched affine
  systems with unknown switches.
\newblock {\em IEEE Trans. on Autom. Contr.}, 59(11):2921--2932, 2014.

\bibitem{harirchi2018automatica}
F.~Harirchi and N.~Ozay.
\newblock Guaranteed model-based fault detection in cyber--physical systems:
  {A} model invalidation approach.
\newblock {\em Automatica}, 93:476--488, 2018.

\bibitem{Girard2007}
A.~Girard.
\newblock Approximately bisimilar finite abstractions of stable linear systems.
\newblock In {\em ACM International Conference on Hybrid Systems: Computation
  and Control}, pages 231--244. Springer, 2007.

\bibitem{Girard2012}
A.~Girard and S.~Martin.
\newblock Synthesis for constrained nonlinear systems using hybridization and
  robust controller on symplices.
\newblock {\em IEEE Trans. on Automatic Control}, 57(4):1046--1051, 2012.

\bibitem{Singh2018CDC}
K.~Singh, Q.~Shen, and S.Z. Yong.
\newblock Mesh-based affine abstraction of nonlinear systems with tighter
  bounds.
\newblock In {\em IEEE Conference on Decision and Control}, pages 3056--3061,
  2018.

\bibitem{Shen2019ACC}
Q.~Shen and S.Z. Yong.
\newblock Robust optimization-based affine abstractions for uncertain affine
  dynamics.
\newblock In {\em American Control Conference}, pages 2452--2457, July 2019.

\bibitem{Alimguzhin2017}
V.~Alimguzhin, F.~Mari, I.~Melatti, I.~Salvo, and E.~Tronci.
\newblock Linearizing discrete-time hybrid systems.
\newblock {\em IEEE Transactions on Automatic Control}, 62(10):5357--5364,
  2017.

\bibitem{Milanese2004SetMI}
M.~Milanese and C.~Novara.
\newblock Set membership identification of nonlinear systems.
\newblock {\em Automatica}, 40:957--975, 2004.

\bibitem{canale2014nonlinear}
M.~Canale, L.~Fagiano, and M.C. Signorile.
\newblock Nonlinear model predictive control from data: a set membership
  approach.
\newblock {\em International Journal of Robust and Nonlinear Control},
  24(1):123--139, 2014.

\bibitem{zabinsky2003optimal}
Z.B Zabinsky, R.L Smith, and B.P Kristinsdottir.
\newblock Optimal estimation of univariate black-box {L}ipschitz functions with
  upper and lower error bounds.
\newblock {\em Computers \& Operations Res.}, 30(10):1539--1553, 2003.

\bibitem{beliakov2006interpolation}
G.~Beliakov.
\newblock Interpolation of {L}ipschitz functions.
\newblock {\em Journal of computational and applied mathematics},
  196(1):20--44, 2006.

\bibitem{calliess2014conservative}
J.P. Calliess.
\newblock {\em Conservative decision-making and inference in uncertain
  dynamical systems}.
\newblock PhD thesis, University of Oxford, 2014.

\bibitem{sukharev1978optimal}
A.G. Sukharev.
\newblock Optimal method of constructing best uniform approximations for
  functions of a certain class.
\newblock {\em USSR Computational Mathematics \& Mathematical Physics},
  18(2):21--31, 1978.

\bibitem{kearns1994introduction}
M.J. Kearns, U.V. Vazirani, and U.~Vazirani.
\newblock {\em An introduction to computational learning theory}.
\newblock MIT press, 1994.

\bibitem{YALMIP}
J.~{L\"{o}fberg}.
\newblock {YALMIP}: A toolbox for modeling and optimization in {MATLAB}.
\newblock In {\em CACSD}, Taipei, Taiwan, 2004.

\bibitem{gurobi}
{Gurobi Optimization, Inc.}
\newblock Gurobi optimizer reference manual, 2015.

\bibitem{Dubins1957}
L.E. Dubins.
\newblock On curves of minimal length with a constraint on average curvature,
  and with prescribed initial and terminal positions and tangents.
\newblock {\em American J. of Mathematics}, 79:497--516, 1957.

\bibitem{bishop:2006:PRML}
C.M. Bishop.
\newblock {\em Pattern Recognition and Machine Learning}.
\newblock Springer, 2006.

\end{thebibliography}

\end{document}